\Crefname{appendix}{App.}{Apps.}
\Crefname{equation}{Eq.}{Eqs.}
\Crefname{figure}{Fig.}{Figs.}
\newcommand{\Tr}{\operatorname{Tr}}
\newcommand{\ad}{\operatorname{ad}}
\renewcommand{\L}{\mathcal{L}}
\renewcommand{\d}{\mathrm{d}}
\newtheorem{lemma}{Lemma}
\newtheorem{definition}{Definition}
\begin{document}
 \title{State-space gradient descent and metastability in quantum systems}

 \author{Shuchen Zhu}
\email{shuchen.zhu@duke.edu} 
\affiliation{Department of Mathematics, Duke University, Durham, NC 27708, USA}

 \author{Yu Tong}
\email{yu.tong@duke.edu}
\affiliation{Department of Mathematics, Duke University, Durham, NC 27708, USA}
\affiliation{Department of Electrical and Computer Engineering, Duke University, Durham, NC 27708, USA}
\affiliation{Duke Quantum Center, Duke University, Durham, NC 27701, USA}

\date{\today}
\begin{abstract}
We propose a quantum algorithm, inspired by ADAPT-VQE, to variationally prepare the ground state of a quantum Hamiltonian, with the desirable property that if it fails to find the ground state, it still yields a physically meaningful local-minimum state that oftentimes corresponds to a metastable state of the quantum system. 
At each iteration, our algorithm reduces the energy using a set of local physical operations. The operations to perform are chosen using gradient and Hessian information that can be efficiently extracted from experiments.
We show that our algorithm does not suffer from the barren plateau problem, which is a significant issue in many variational quantum algorithms.
We use numerical simulation to demonstrate that our method reliably produces either the true ground state or a physically meaningful metastable state in typical physical systems with such states.
\end{abstract}
\maketitle

\section{Introduction}\label{sec:introduction}

Variational quantum algorithms (VQAs) are crucial for near-term quantum computing as they are well-suited for current noisy intermediate-scale quantum (NISQ) devices~\cite{Peruzzo2014,Cerezo2021}. VQAs require shallow circuits and utilize the power of classical optimization, making them practically implementable today for applications like quantum chemistry, optimization, and machine learning~\cite{Bharti2022,Cerezo2021}.
Despite their practical advantages, VQAs encounter several challenges that limit their performance. One critical issue is the barren plateau phenomenon, where gradients vanish exponentially with increasing qubit numbers, hindering the training of quantum circuits~\cite{McClean2018,Cerezo2021b,Larocca2025,cerezo2024doesprovableabsencebarren}.  Additionally, the complex energy landscapes, with numerous local minima, can trap optimization algorithms and prevent convergence to global minima~\cite{Bittel2021}. 
The numerous local minima are also a result of the circuit ansatz and may not correspond to any physically meaningful state~\cite{Ryabinkin2020,Bittel2021}.

The ADAPT-VQE \cite{GrimsleyEconomouBarnesMayhall2019adaptive, gustafson2024surrogateconstructedscalablecircuits,Grimsley2023,PhysRevResearch.6.013254} provides a promising approach to tackle the barren plateau problem. By gradually growing the quantum circuit through the addition of gates that best reduce the energy rather than using a fixed ansatz, ADAPT-VQE can oftentimes escape from ansatz-dependent local minima \cite{GrimsleyBarronBarnesEtAl2023}. 
If one views a parameterized quantum circuit as a mapping from the parameter space to the space of quantum states, which we will refer to as the \textit{state space} henceforth, then ADAPT-VQE can be seen as directly performing operations on the state space rather than on the parameter space. 
Another example where state-space information plays a role is the quantum natural gradient algorithm \cite{StokesIzaacKilloranEtAl2020quantum,KoczorBenjamin2022quantum,Patel2024natural}, which also alleviates the problem of local minima \cite{WierichsGoglinEtAl2020avoiding}.

In this work, we adopt this approach of operating in the state space to iteratively reduce the energy, thus steering the quantum state towards the ground state. This iteration will eventually reach a \textit{local-minimum state}, where a given set of operations can no longer reduce the energy. The ground state is naturally a valid local-minimum state, but there are likely other local-minimum states, and it is reasonable to ask whether these local-minimum states are physically meaningful or merely artifacts of algorithmic failure.

We observe that the set of local-minimum states depends on the available operations. 
If we are only allowed to use unitary operations, as in ADAPT-VQE, then the maximally mixed state is already a local-minimum state. 
This runs counter to physical intuition since it is easy to reduce the energy using non-unitary operations in such a scenario. 

Moreover, \cite{ChenHuangPreskillZhou2023local} proves that a Haar-random quantum state almost certainly has exponentially vanishing energy gradients with system size, making it practically indistinguishable from a local minimum.
To make the local-minimum states physically meaningful, we therefore need to implement non-unitary operations by introducing an ancilla register, which is one key difference between our approach and ADAPT-VQE.

Some of these ideas have been explored in connection to ADAPT-VQE. In particular, the state-space picture has been discussed in \cite{TangChenBiswasEtAl2024non} where the quantum state is updated through increasing circuit depth. Our method differs from \cite{TangChenBiswasEtAl2024non} mainly in that we use ancilla qubits, and we update the quantum state with a set of operators rather than just one, which resembles how gradient descent differs from coordinate descent in classical optimization. Ancilla qubits have also been used for ADAPT-VQE \cite{warren2022adaptive,sambasivam2025tepid} but mainly for Gibbs state preparation, which is a different task from what we are considering in this work.

Our algorithm, which we name \textit{state-space gradient descent} (SSGD), ensures that at convergence it yields a local-minimum state such that, if perturbed by an operation characterized by a local Lindbladian generator, the energy cannot be further decreased. We present numerical evidence that such a state is either the ground state or likely a physically meaningful \textit{metastable state}.

A metastable state is a long-lived physical configuration stable against small disturbances but ultimately transitioning to a lower-energy state \cite{2025_Yin,2016_Macieszczak}. 
These states arise in various quantum systems such as quantum dots, superconductors, trapped ions, and ultracold gases, enabling detailed observation of short-lived quantum phenomena \cite{2015_Mason,2019_Tanzi,2024_Pucher,2018_Earnest}. Metastable states are fundamental to quantum computing and memory applications, allowing stable manipulation and readout of quantum information \cite{2023_DeBry,2024_Pucher,2018_Earnest}. They also provide insights into non-equilibrium quantum dynamics like prethermalization, quantum scars, and exotic phases such as supersolids and false vacuum decay \cite{2025_Yin,2019_Tanzi,2024_Zenesini,2025_Vodeb}. 

Metastable states are oftentimes also local minima of the energy landscape: in order to escape from metastable states, one needs to pass through an energy barrier. In dynamics where energy-decreasing updates are favored, such as Glauber dynamics, the long lifetime can be seen as a result of the energy barrier since successive energy-increasing moves occur with very small probability. Recently Yin et al.~\cite{2025_Yin} proposed a mathematical theory for metastable states and rigorously proved this connection in the quantum setting. However, the perturbations they considered do not necessarily correspond to physically realizable completely positive trace-preserving maps (CPTP), and no algorithm to find such states was provided. Chen et al.~\cite{ChenHuangPreskillZhou2023local} studied local minima in quantum systems and provided an efficient algorithm for this task, but the algorithm involves a complicated procedure for quasi-local Lindbladian simulation, putting it beyond the reach of current devices. A recent concise review on efficiently locating local minima with quantum computers is provided in Ref.~\cite{Li2025review}.

In this work, we consider local-minimum states that are stable to local physically realizable perturbations, and propose an algorithm to find such states on near-term devices. Although we consider stability under perturbations generated by Lindbladians, our algorithm does not rely on any Lindbladian simulation algorithm. We numerically simulate our algorithm for the 1D transverse-field Ising model and neutral atoms arranged on a ring, which are typical quantum systems known to exhibit metastable states, and observe that it consistently converges to either the ground state or the metastable state. Our algorithm is provably free from the barren plateau issue.

\section{Theory}

The SSGD algorithm that we are going to describe iteratively updates the quantum state by local operations to reduce the energy. It uses an ancilla register, which in the simplest case consists of one ancilla qubit, to implement non-unitary operations. We denote the original system consisting of $N$ qubits by $S$ and the ancilla register by $A$.

The local operations we consider are generated by a set of generators (typically Pauli operators) $\mathcal{G}=\mathcal{G}_A\cup \mathcal{G}_S$, where generators in $\mathcal{G}_A$ act jointly on the ancilla $A$ and system $S$, and generators in $\mathcal{G}_S$ act only on $S$. We explicitly write out $\mathcal{G}=\{P_1,P_2,\cdots,P_d\}$, and denote $\vec{P}=(P_1, P_2, \dots, P_d)$. For concreteness we consider a single-qubit ancilla register, let $\mathcal{G}_A$ be the set of all Pauli operators that act on the ancilla qubit and at most $k-1$ adjacent systems qubits, and $\mathcal{G}_S$ be the set of all Pauli operators that act on at most $k$ adjacent system qubits. We will always use this setup unless otherwise stated, even though our algorithm works for more general choices of $\mathcal{G}_A$ and $\mathcal{G}_S$ as well. More ancilla qubits are helpful for parallelizing operations but otherwise do not change the performance of the algorithm.

Our goal is to get a local-minimum state, i.e., a state for which local operations cannot further decrease the energy without passing through an energy barrier. Obviously, the ground state is a local minimum, while other states satisfying this criterion have a close connection to metastable states, as will be discussed in the numerical results section.

At each iteration, we start with a state $\rho$ in $S$, and the ancilla register is initialized in the $\ket{0}$ state. Therefore the joint state of $AS$ is $\tilde{\rho}=\ket{0}\bra{0}\otimes\rho$.
We then update $\tilde{\rho}$ using these Pauli operators as follows. For $\vec{\theta} \in \mathbb{R}^d$,  $\vec{\theta} \cdot \vec{P} = \theta_1 P_1 + \dots + \theta_d P_d$.
The unitary transformation is given by  
\begin{align}
 \tilde\rho(\vec\theta) = U(\vec{\theta}) \tilde\rho U^\dagger(\vec{\theta}),
\end{align}
where $U(\vec{\theta}) = e^{-i\vec{\theta} \cdot \vec{P}}$,
with  energy   $E(\vec{\theta}) = \Tr(\tilde\rho(\vec{\theta}) \tilde H)$ where $\tilde{H}=I\otimes H$. We finally reset the ancilla register, so that the system register is in the state $\Tr_A(\tilde\rho(\vec\theta))$.

Similar to local minima in classical optimization, for the local-minimum states we are looking for, the first-order optimality condition needs to be satisfied:  
\begin{align}
\left.\frac{\partial}{\partial\theta_j}E(\vec\theta)\right|_{\vec{\theta}=0}=-i\Tr([P_j,\tilde\rho]\tilde H) =0,\label{eq: first-order}
\end{align}
which is equivalent to $\Tr(\ad_{P_j}(\tilde\rho)\tilde H)=0$, where $\ad_{A}(B):=[A,B]$ for operators $A$ and $B$.
A second-order condition is also necessary to distinguish local minima from saddle points. We first define the Hessian matrix $K=(K_{jk})$ where
\begin{align}\label{eq: Hessian_def}
    K_{jk} = \left.\frac{\partial^2}{\partial\theta_j\partial\theta_k}E(\vec\theta)\right|_{\vec\theta=0}
    =  -\frac{1}{2}\Tr(\{\ad_{P_k},\ad_{P_j}\}(\tilde\rho)\tilde H),
\end{align}
and the second-order condition is $K = (K_{jk})\succeq 0$. 

One may also consider the optimality with respect to Lindbladian perturbation. We may perturb the quantum state using a CPTP map $e^{i\theta \mathcal{L}}$ for small $\theta$, where $\mathcal{L}(\cdot)=L\cdot L^\dag-\frac{1}{2}\{L^\dag L,\cdot\}$ is a Lindbladian generator on $k-1$ qubits. The optimality condition is then
\begin{equation}
    \label{eq:lindblad_optimality}
    \frac{\d}{\d \theta}\Tr[H e^{\theta\mathcal{L}}(\rho)]\geq 0.
\end{equation}
Note that we only need to consider $\theta\geq 0$ otherwise $e^{\theta\mathcal{L}}$ would not be CPTP.
Surprisingly, this Lindbladian optimality condition is implied by the second-order condition \eqref{eq: Hessian_def}, as proved in Lemma~\ref{lem:second_order_condition_implies_lindblad_optimality}. We therefore do not need to implement Lindbladian evolution in our algorithm but only need to apply unitaries to satisfy \eqref{eq: first-order} and \eqref{eq: Hessian_def}.

With the above optimality conditions, we can give a formal definition of a local-minimum state
\begin{definition}
    \label{defn:local_minimum_state}
    A state $\rho$ is a local-minimum state with respect to perturbations generated by $\mathcal{G}=\mathcal{G}_A\cup\mathcal{G}_S$ if it satisfies \eqref{eq: first-order} and its corresponding Hessian matrix defined \eqref{eq: Hessian_def} is positive semi-definite.
\end{definition}

Our algorithm converges to a state that approximately satisfies the above conditions. Even though the lifetime aspect of the metastable state is not captured in the above definition, all the local-minimum states we find in numerical experiments are long-lived metastable states, giving evidence of the close connection between these two concepts.

\subsection{The effect of ancilla and barren plateau}

The introduction of an ancilla qubit is important for avoiding the barren plateau issue that plagues many variational quantum algorithms. If we only allow unitary operations generated by a set $\mathcal{G}$ of $k$-local generators, then by \cite[Lemma~C.1]{ChenHuangPreskillZhou2023local}, for any $\epsilon>2^{-n/4}$, with probability at least $1-2^{-2^{n/4}}$, a Haar-random state $\ket{\psi}$ approximately satisfies the first-order optimality condition in the sense that
\[
\left|\frac{\d}{\d\theta} \Tr[e^{i\theta P}\ket{\psi}\bra{\psi}e^{-i\theta P}H]\Big|_{\theta=0}\right|\leq \epsilon,\quad \forall P\in\mathcal{G}.
\]
This means that without the ancilla qubit, there will be a multitude of physically irrelevant local minima, making it difficult for the algorithm to either reach the ground state or any physically meaningful metastable states.

We will next discuss a particular modification of our algorithm that is provably free from the barren plateau problem. For simplicity, we consider a 1D circuit even though the result can be straightforwardly generalized to higher dimensions. 
We index the system qubits by $1,2,\cdots,N$, and use $N$ ancilla qubits similarly indexed.
We set $\mathcal{G}_A$ to be the set of Pauli operators supported on the $j$th system qubit and the $j$th ancilla qubit for each $j$ (by this we include Pauli operators that act only on one qubit). $\mathcal{G}_S$ is chosen among 2 choices for each step. We let $\mathcal{G}_S$ include either all Pauli operators supported on the $(2j-1)$th and $2j$th qubits, $j=1,2,\cdots,N/2$, or those supported on the $2j$th and $2j+1$th qubits. 
Different from our original algorithm, now in each iteration we rotate among $\mathcal{G}_A$ and the two choices of $\mathcal{G}_S$, and only update the state with generators from the given set. This results in a brickwall circuit structure, as shown in Figure~\ref{fig:SSGD_circuit}.

The barren plateau issue manifests when the energy variance vanishes with increasing system size under randomly chosen parameters. In our setting, all parameters $\vec{\theta}_S$ and $\vec{\theta}_A$ are chosen randomly, such that each two-qubit unitary is effectively sampled from the Haar measure over $\mathrm{SU}(2)$. Up to any fixed iteration, our circuit becomes a special case of the dynamically parameterized circuit considered in \cite{DeshpandeHinscheEtAl2024dynamic} (more precisely their Definition~3). 

For a $k$-local Hamiltonian, we can directly apply~\cite[Theorem~1]{DeshpandeHinscheEtAl2024dynamic} to compute a lower bound on the energy variance:
\[
\mathrm{Var}_{\vec{\theta}}[\Tr(H\rho(\vec{\theta}))] \geq \frac{\|H\|_{\mathrm{HS}}^2}{5^{k(f+1)}},
\]
where $\|\cdot\|_{\mathrm{HS}}$ denotes the normalized Hilbert-Schmidt norm, and $f$ is the minimum distance between an observable and the nearest feedforward operation within its lightcone. 
Since an observable is separated by at most 5 circuit layers from a reset operation on the ancilla qubits within its lightcone, we have $f\leq 5$. Consequently, the variance is lower bounded by $\|H\|_{\mathrm{HS}}^2$ up to a constant factor, which does not vanish as $n$ increases.

\begin{figure}
    \centering
    \includegraphics[width=0.8\linewidth]{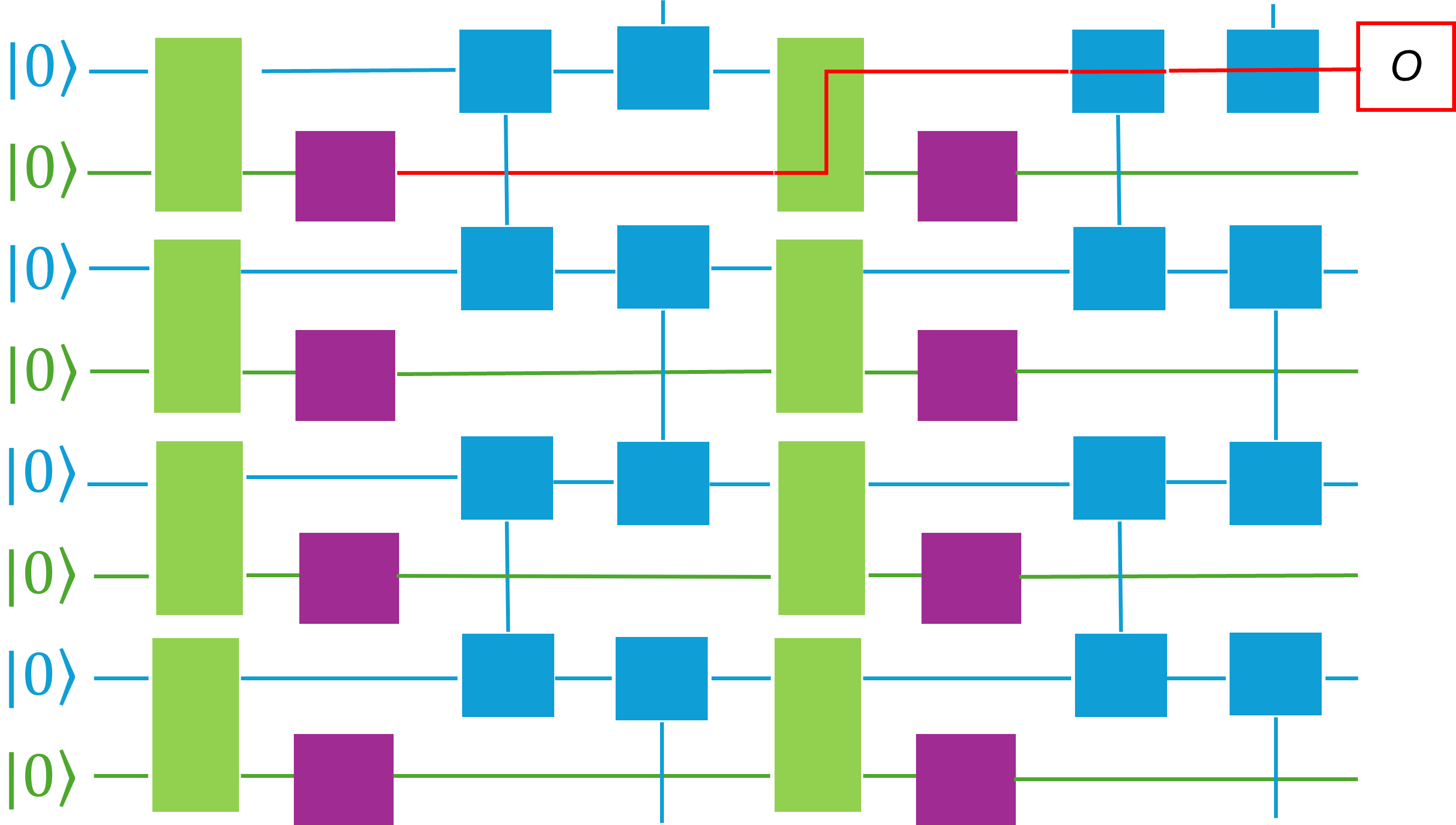}
    \caption{The circuit structure for SSGD. System qubits and ancilla qubits are colored blue and green respectively. Blue and green boxes represent two-qubit gates on system qubits and between system and ancilla qubits respectively. Purple boxes represent measurement and reset. The red line tracks how a single-qubit observable $O$ is connected to the nearest reset operation.}
    \label{fig:SSGD_circuit}
\end{figure}

\subsection{The algorithm}
%
We will outline our SSGD algorithm given a set of local generators $\mathcal{G}=\mathcal{G}_S\cup\mathcal{G}_A$.
Our algorithm works for any general choice of $\mathcal{G}$.
The pseudocode is outlined in Algorithm~\ref{alg:general_SGD}. 

\begin{algorithm}
\caption{General scheme of an SSGD algorithm.}
\label{alg:general_SGD}
 \KwData{Hamiltonian $H$, generators $\mathcal{G}=\mathcal{G}_A\cup\mathcal{G}_S$, initial state $\rho$, max iterations $T$, step sizes $\delta t_S, \delta t_A$.}
 \KwResult{A local-minimum state $\rho$ of $H$.}

\For{$t=1,2,\cdots,T$}{
    $\vec{g}_S\gets$ $\textsc{SystemDirection}(H, \rho, \mathcal{G}_S)$

    $\vec{g}_A\gets$ $\textsc{AncillaDirection}(H, \rho, \mathcal{G}_A)$

    $\vec{\theta}_S\gets \vec{g}_S\delta t_S$, $\vec{\theta}_A\gets \vec{g}_A\delta t_A$, $\vec{\theta}\gets(\vec{\theta}_A,\vec{\theta}_S)$.

    $U(\vec{\theta})\gets e^{-i(\vec{\theta}_S\cdot \mathcal{\vec G}_S + \vec{\theta}_A\cdot \mathcal{\vec G}_A)}$, 
    $\tilde\rho \gets U\tilde\rho U^\dagger$.

    Reset the ancilla register $A$.

}

 \textbf{return} $\tilde\rho$
\end{algorithm}

In the above $\vec{\mathcal{G}}_A$ and $\vec{\mathcal{G}}_S$ are the vectors containing elements of $\mathcal{G}_A$ and $\mathcal{G}_S$ respectively.
The subroutine \textsc{SystemDirection} computes the gradient of $E(\vec{\theta})$ corresponding to entries of $\vec{\theta}_S$. By \Cref{eq: first-order} the $j$th entry of this gradient is computed as
\begin{align}
(\vec{g}_S)_j 
&=-i \Tr\left( [(\mathcal{\vec{G}}_S)_j,\tilde{H}]  \tilde{\rho}\right) + \delta_j
\end{align}
where $\delta_j\sim \mathcal{N}(0, \sigma^2_j)$ is Gaussian noise that accounts for the inevitable statistical noise coming from measuring the observables $-i[(\mathcal{\vec{G}}_S)_j,\tilde{H}]$. The variance can be controlled by controlling the number of samples, and we choose $\sigma^2_j=\delta t_S$ in numerical experiments.

A different approach is needed for the generators acting on the ancilla qubits. Consider a Pauli generator $P=P_A\otimes P_S\in \mathcal{G}_A$. If $\braket{0|P_A|0}=\pm 1$, then $e^{-i \theta P}$ does not create any entanglement between $A$ (which starts in the state $\ket{0}$) and $S$, and one might as well just apply a lower-weight operator $P_S$ as an element in $\mathcal{G}_S$. Therefore we only need to include into $\mathcal{G}_S$ those $P=P_A\otimes P_S$ such that $\braket{0|P_A|0}=0$. For these operators, we can easily verify that the corresponding energy partial derivatives are $0$, and therefore the first-order conditions are already satisfied. We then need to rely on second-order information in order to make use of these generators.

The subroutine \textsc{AncillaDirection} therefore uses the Hessian matrix
$K_{jk} = -\frac{1}{2} \Tr\left( \{ \operatorname{ad}_{P_k}, \operatorname{ad}_{P_j} \} (\tilde{\rho}) \tilde{H} \right)$, for $P_j, P_k \in \mathcal{G}_A$.
Let $Q$ be the matrix of eigenvectors and $\vec{E}$ the vector of eigenvalues of $K$. Define the clipped eigenvalue vector $\vec{E}'$ by
\begin{align}
\vec{E}'_i = 
\begin{cases}
\vec{E}_i, & \text{if } \vec{E}_i < -E_{\mathrm{tol}}, \\
0,        & \text{otherwise}.
\end{cases}
\end{align}
where $E_{\mathrm{tol}}>0$ is a small number chosen to make the procedure robust against statistical noise on $K_{jk}$.
We choose the direction to move to be $\vec{g}_A = Q \vec{E}'$, which ensures that we move in a direction that reduces the energy according to the Hessian matrix, and directions along which the energy decreases more steeply are favored. This choice of $\vec{g}_A$ is not unique and one can explore many other possibilities.

\section{Numerical simulation}

We will apply our algorithm to the one-dimensional Transverse Field Ising Model (TFIM) and an antiferromagnetic neutral atom chain through numerical simulation. The simulation is done with the software package QuTiP \cite{johansson2012qutip}.

\subsection{One-dimensional Transverse Field Ising Model }

The TFIM is one of the most well-studied models in condensed matter physics~\cite{sachdev2011quantum,pfeuty1970one}.
It serves as a prototypical example of a quantum system that undergoes a quantum phase transition at zero temperature~\cite{sachdev2011quantum}. 
In its standard form, the Hamiltonian is given by:
\begin{equation}
    H = -J \sum_{i} \sigma_i^z \sigma_{i+1}^z - h_x \sum_{i} \sigma_i^x - h_z \sum_{i} \sigma_i^z,
    \label{eq:tfim}
\end{equation}
where $\sigma_i^x$ and $\sigma_i^z$ are the Pauli matrices acting on site $i$, $J$ is the nearest-neighbor interaction strength, and  $h_x, h_z$  represent the transverse and longitudinal field strengths, respectively.

In the absence of a longitudinal field $h_z = 0$, the system exhibits a well-known quantum phase transition at the critical point $h_x/J = 1$~\cite{pfeuty1970one}. 
Introducing a longitudinal field $h_z \neq 0$  explicitly breaks the $\mathbb{Z}_2$ symmetry of the standard TFIM~\cite{kormos2017real,liu2019confined}.
This perturbation makes the system non-integrable, leading to rich quantum many-body dynamics, including domain-wall confinement, dynamical oscillations, and slow thermalization~\cite{kormos2017real,liu2019confined,vovrosh2021confinement,mazza2019suppression}.

We perform numerical simulations of the SSGD dynamics for initial states of the 1D TFIM, incorporating both unitary and non-unitary gate updates while varying the ratio $h_x/J$ for small $h_z$.
A metastable state~\cite{lagnese2021false} can be prepared in certain regimes through quantum quenching. Initially, the system is prepared in a ferromagnetic state with all spins aligned in $\sigma^z_i = 1$ direction. The state is then evolved until it sufficiently converges to the ground state for $h_z < 0$. At this point, a quench is performed by switching $h_z \rightarrow -h_z$, causing the original ground state to transform into a metastable state.

\begin{figure}[!ht]
   \includegraphics[width=1\linewidth]{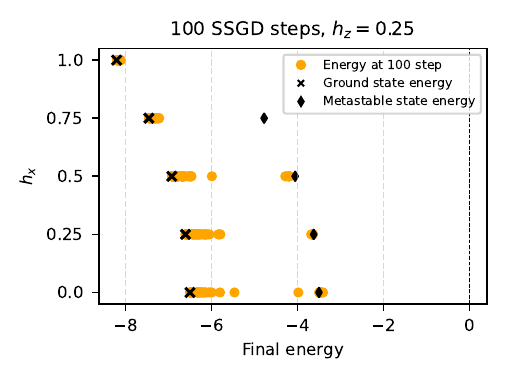}
    \caption{The final state energies of all 64 initial computational-basis states after 100 SSGD steps is shown, together with ground state and metastable state energies. 
    We observe that the final energies cluster around ground state and metastable state energies.
    The parameters used in the simulation are $J = 1$, $h_z = 0.25$, and $0 \leq h_x \leq 1$.
}
    \label{fig:TFIM_overall}
\end{figure}

For certain initial conditions, intertwining dissipative non-unitary evolution with unitary stochastic gradient descent (SGD) can accelerate the escape from the metastable state as shown in \Cref{fig:TFIM_1v1_000111_hx0}. 
However, the dissipative method does not always guarantee an advantage. For certain initial states, it may exhibit slower convergence at a late stage compared to purely unitary dynamics with $\mathcal{G}_A=\emptyset$, as illustrated in \Cref{fig:hx025_TFIM_111001.png}.

\begin{figure}[!ht]
   \includegraphics[width=1\linewidth]{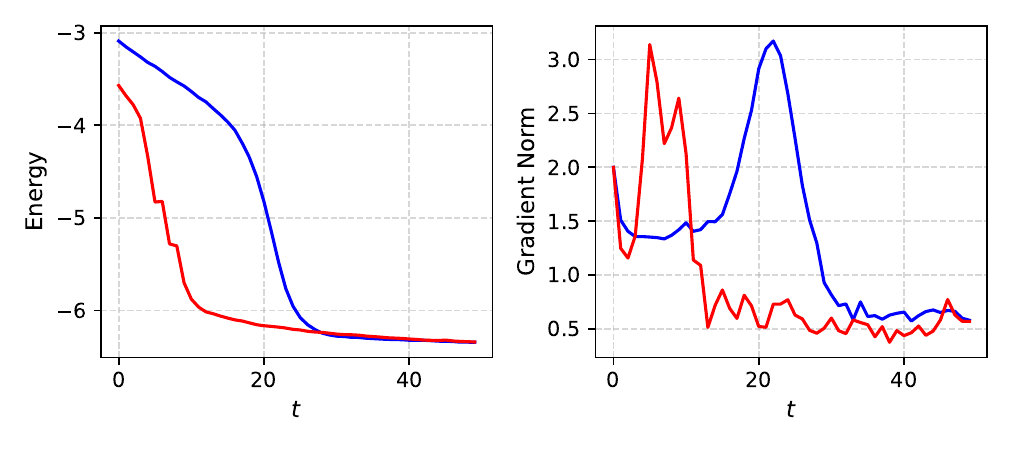}
    \caption{For the initial state $\ket{000111}$, we compare SSGD with dissipative evolution (red line) with SSGD with purely unitary evolution (blue line).  The parameters used are $J = 1$, $h_z = 0.25$, and $h_x = 0.25$.}
    \label{fig:TFIM_1v1_000111_hx0}
\end{figure}

\begin{figure}[!ht]
   \includegraphics[width=1\linewidth]{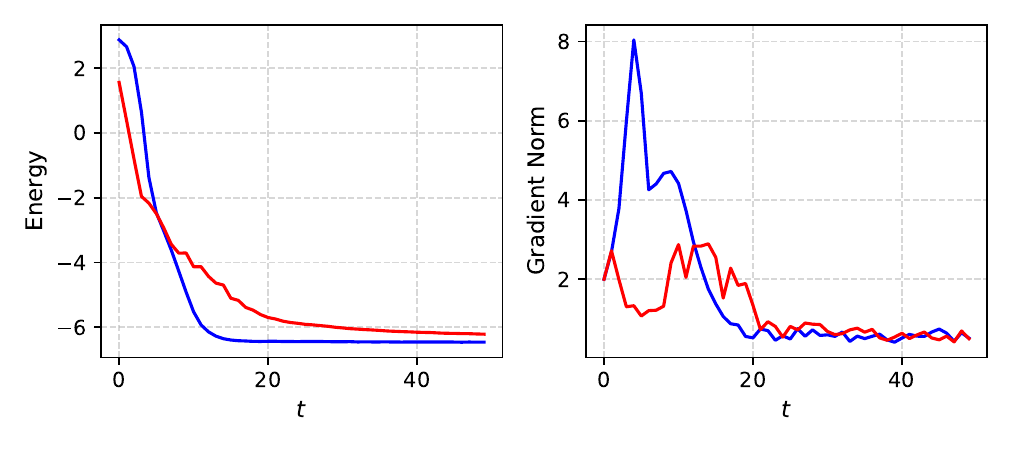}
    \caption{For the initial state $\ket{111001}$,  we compare SSGD with dissipative evolution (red line) with SSGD with purely unitary evolution (blue line). The parameters used are $J = 1$, $h_z = 0.25$, and $h_x = 0.25$.}
    \label{fig:hx025_TFIM_111001.png}
\end{figure}

\subsection{Rydberg atom}

Rydberg atom arrays, particularly one-dimensional chains of neutral atoms excited to Rydberg states, have attracted significant interest as controllable quantum systems with tunable long-range interactions, ideal for studying quantum phase transitions and non-equilibrium many-body phenomena~\cite{browaeys2020many, saffman2010quantum, bernien2017probing}.
Recent studies show that Rydberg atom arrays can achieve high-fidelity quantum gates, such as CZ and Toffoli, suitable for gate-based quantum computing~\cite{Levine2019,Evered2023}. Experiments have successfully demonstrated quantum algorithms including QAOA and quantum phase estimation~\cite{Graham2022}. Additionally, specialized error-correction approaches promise scalable, fault-tolerant architectures tailored to Rydberg-based quantum computers~\cite{Cong2022}.

A system of $N$ atoms arranged in a one-dimensional periodic chain is described by the Hamiltonian \begin{equation} H = \sum_{i=0}^{N-1} \left( \frac{\Omega}{2} \sigma_i^x - \Delta_i n_i \right) + \sum_{i<j} V_{ij} n_i n_j, \label{eq:Rydberg} \end{equation} where $\sigma_i^x$ represents the Rabi oscillations between the ground and Rydberg states of atom $i$, $\Omega$ is the Rabi frequency, $\Delta_i$ is the detuning, and $n_i = (1 - \sigma_i^z)/2$ is the occupation number operator. The interaction term $V_{ij} = C_6 / |r_i - r_j|^6$ represents the van der Waals interaction, giving rise to the Rydberg blockade effect, which prevents the simultaneous excitation of nearby atoms. We will use $\ket{0}$ to represent the atomic ground state and $\ket{1}$ to represent the Rydberg state.

In this Hamiltonian, in the absence of the $\sigma_i^x$ term and with appropriate values for the parameters, the two configurations with the lowest energies are $\ket{101010}$ and $\ket{010101}$, with the former being the ground state while the latter being a metastable state. With a non-zero $\sigma_i^x$ term we can identify the metastable state by the dominant computational basis state in $\rho$: the state with $\ket{101010}$ being dominant has lower energy and is the ground state, while $\ket{010101}$ indicates the metastable state.

For numerical simulation, we assume periodic boundary conditions and set the parameters as in Ref.~\cite{Darbha:2024srr} as $\Omega/\pi \in [1,5]$ MHz, $R_b=9.76$ $\mu$m, $a=8$ $\mu$m, $\Delta_{\text{glob}}/2\pi=2.5$ MHz, $\Delta_{\text{loc}}/2\pi=0.625$ MHz, where $\Delta_j := \Delta_{\text{glob}}+(-1)^j\Delta_{\text{loc}}$, and $R_b=(C_6/\Omega)^{1/6}$. The existence of metastable states is demonstrated in \Cref{fig:rydberg_overall} under $100$ gradient descent steps.
The metastable state in the Rydberg system is initialized from the $\mathbb{Z}_2$ state $\ket{010101}$, which approximates the false vacuum state. During evolution, the N{\'e}el order parameter $N_e=\frac{1}{N} \sum_j (-1)^j \sigma_z^{j}$ remains close to $1$ for $t > 0$, indicating that the system remains in the metastable state for a significant period. The decay of this state follows an exponential trend, characteristic of quantum tunneling rather than a phase transition.

\begin{figure}[]
   \includegraphics[width=1\linewidth]{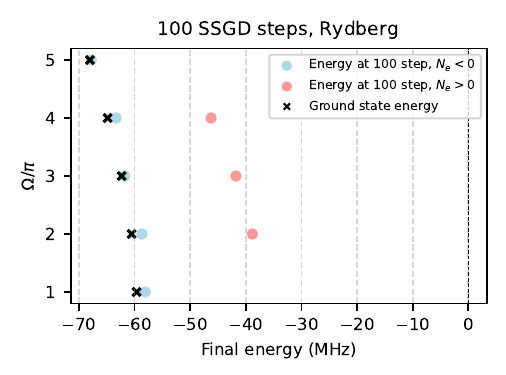}
    \caption{The final state energies of some initial computational-basis states after 100 SSGD steps is shown, together with the ground state energy. 
    The data points are colored according to the signs of their N{\'e}el order parameter values $\braket{N_e}$, with $\braket{N_e}>0$ corresponding to the metastable state and $\braket{N_e}<0$ corresponding to the ground state.
}
    \label{fig:rydberg_overall}
\end{figure}

We found that the dissipative evolution does not always provide an advantage over the purely unitary SSGD and sometimes exhibits slower convergence. However, when the unitary dynamics is initially trapped in a metastable state for an extended period, the dissipative method helps it escape more quickly, as shown in \Cref{fig:rydberg_omega2pi_010111}. While for some initial conditions, ancilla method does not gain any advantage as shown in \Cref{fig:rydberg_omega2pi_011100}.

\begin{figure}[]
   \includegraphics[width=1\linewidth]{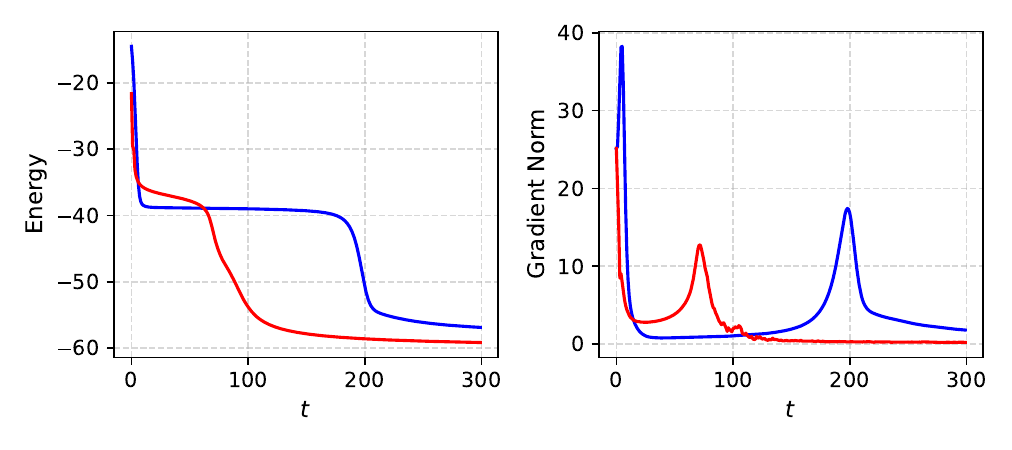}
    \caption{For the initial state $\ket{010111}$, we compare SSGD with dissipative evolution (red line) with SSGD with purely unitary evolution (blue line). $\Omega/2\pi=1.0$ MHz. }
    \label{fig:rydberg_omega2pi_010111}
\end{figure}

\begin{figure}[]
   \includegraphics[width=1\linewidth]{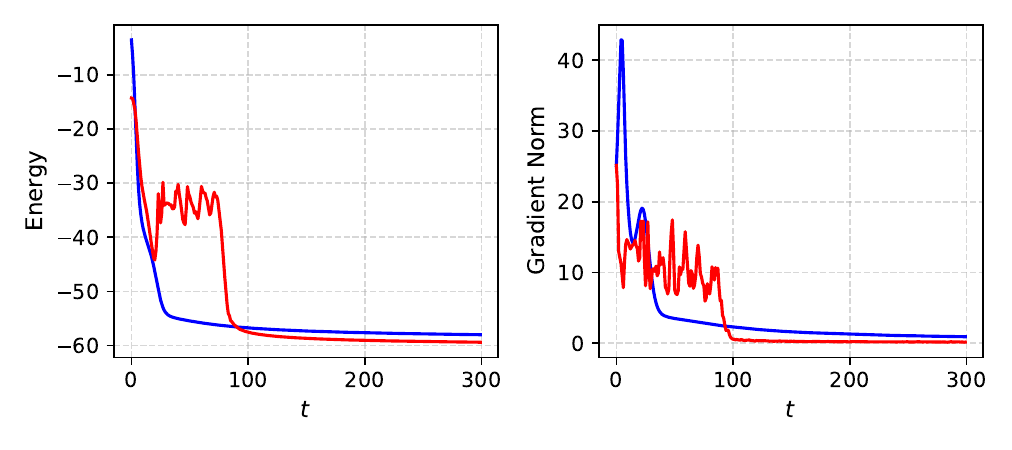}
    \caption{For the initial state $\ket{011100}$, we compare SSGD with dissipative evolution (red line) with SSGD with purely unitary evolution (blue line). $\Omega/2\pi=1.0$ MHz. }
    \label{fig:rydberg_omega2pi_011100}
\end{figure}

\section{Conclusion}

In this work we propose a variational quantum algorithm that, at convergence, prepares either the ground state or a local-minimum state, which we numerically find to be closely associated with long-lived metastable states in the 1D TFIM and a 1D antiferromagnetic neutral atom chain. This quantum algorithm is friendly for implementation on near-term devices, and provably does not suffer from the barren plateau issue.

Our numerical results give evidence to the connection between the local-minimum states we find and long-lived metastable states. It is of interest to establish a mathematically rigorous connection between the two, similar to what was done in \cite{2025_Yin}. To do this we would need to prove that the energy barrier present in a local-minimum state leads to long lifetime. We may also further explore the connection between the local-minimum states with the Lindbladian mixing time \cite{2016_Macieszczak,rouze2024efficient,fang2025mixing,tong2024fast,smid2025polynomial,zhan2025rapid}.

It is also natural to consider an experimental implementation of our algorithm. While the algorithmic framework is simple and does not use any complex quantum algorithm component, we may need certain modifications to make the most of current devices with limited system size and circuit depth. We may incorporate elements from classical optimization algorithm, such as line search, or from other variational quantum algorithms such as ADAPT-VQE, to achieve faster convergence with lower circuit depth.

While this work primarily proposes a quantum algorithm, this framework is flexible enough to yield a quantum-inspired algorithm for finding metastable states on classical computers, using methods such as tensor networks and neural-network quantum states \cite{carleo2017solving}.

\begin{acknowledgements}
The authors thank Edwin Barnes and Di Fang for helpful discussion.
SZ acknowledges support from the U.S. Department of Energy, Office of Science, Accelerated Research in Quantum Computing Centers, Quantum Utility through Advanced Computational Quantum Algorithms, grant No. DE-SC0025572.
\end{acknowledgements}

\bibliography{bibo}

\appendix 

\section{Technical lemmas}
\begin{lemma}\label{lem: Hessian_form}
    The Hessian $(K)_{ij}$ defined in \Cref{eq: Hessian_def} equals to $-\frac{1}{2}\Tr[\{\ad_{P_k},\ad_{P_j}\}(\tilde\rho)\tilde H]$.
\end{lemma}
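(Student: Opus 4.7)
The plan is a direct second-order Taylor expansion of $E(\vec\theta) = \Tr(U(\vec\theta)\tilde\rho U^\dagger(\vec\theta)\tilde H)$ around $\vec\theta = 0$, using the Hadamard/adjoint-action expansion for the conjugation. Setting $A(\vec\theta) = \vec\theta\cdot\vec P$, I would write
\begin{equation}
e^{-iA}\tilde\rho e^{iA} \;=\; \sum_{n\geq 0}\frac{(-i)^n}{n!}\ad_A^{\,n}(\tilde\rho),
\end{equation}
and retain only the terms up to quadratic order in $\vec\theta$, since only these contribute to the Hessian at the origin.

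Next I would expand $\ad_A^{\,2}(\tilde\rho) = \bigl(\sum_j \theta_j \ad_{P_j}\bigr)\bigl(\sum_k \theta_k\ad_{P_k}\bigr)(\tilde\rho) = \sum_{j,k}\theta_j\theta_k\,\ad_{P_j}\ad_{P_k}(\tilde\rho)$, so that
\begin{equation}
E(\vec\theta) = \Tr(\tilde\rho\tilde H) - i\sum_j\theta_j\Tr(\ad_{P_j}(\tilde\rho)\tilde H) - \tfrac{1}{2}\sum_{j,k}\theta_j\theta_k\Tr(\ad_{P_j}\ad_{P_k}(\tilde\rho)\tilde H) + O(|\vec\theta|^3).
\end{equation}
Differentiating twice in $\theta_j$ and $\theta_k$ at $\vec\theta=0$ picks out the coefficient of $\theta_j\theta_k$ (times a factor that symmetrizes the indices when $j\neq k$, and a factor $2$ when $j=k$ which cancels the $1/2$ in the Taylor coefficient of $\theta_j^2$). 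In either case one lands on
\begin{equation}
\left.\frac{\partial^2 E}{\partial\theta_j\partial\theta_k}\right|_{\vec\theta=0} = -\tfrac{1}{2}\bigl[\Tr(\ad_{P_j}\ad_{P_k}(\tilde\rho)\tilde H) + \Tr(\ad_{P_k}\ad_{P_j}(\tilde\rho)\tilde H)\bigr] = -\tfrac{1}{2}\Tr\bigl(\{\ad_{P_k},\ad_{P_j}\}(\tilde\rho)\tilde H\bigr),
\end{equation}
which is the claimed identity.

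There is essentially no obstacle; the only thing that requires a little care is bookkeeping of the factors of $i$, the $1/2$ from the $n=2$ term in the exponential series, and the symmetrization when taking the mixed partial derivative. Alternatively, one can avoid the symmetrization subtlety entirely by using the two-sided derivative identity $\partial_{\theta_j}\partial_{\theta_k}[U\tilde\rho U^\dagger]\big|_0 = -\tfrac{1}{2}\{\ad_{P_j},\ad_{P_k}\}(\tilde\rho)$, proved by differentiating the series term by term, and then tracing against $\tilde H$.
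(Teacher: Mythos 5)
Your proposal is correct and follows essentially the same route as the paper: a direct second-order Taylor expansion of $U(\vec\theta)\tilde\rho\,U^\dagger(\vec\theta)$ at $\vec\theta=0$ traced against $\tilde H$, with the symmetrization in $j,k$ producing the anticommutator $\{\ad_{P_j},\ad_{P_k}\}$. The paper organizes the computation via the product rule on $U\tilde\rho U^\dagger$ using $\partial_k U|_0=-iP_k$ and $\partial_j\partial_k U|_0=-\tfrac{1}{2}(P_jP_k+P_kP_j)$, whereas you expand the adjoint-action series $\sum_n\frac{(-i)^n}{n!}\ad_{\vec\theta\cdot\vec P}^{\,n}(\tilde\rho)$, but these are the same calculation in different bookkeeping.
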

\begin{proof}
    \begin{align}
    K_{jk} 
    &= \left.\frac{\partial^2}{\partial\theta_j\partial\theta_k} 
    \Tr\big( U(\vec{\theta})\, \tilde{\rho}\, U^\dagger(\vec{\theta})\, \tilde{H} \big) 
    \right|_{\vec{\theta}=0} \notag \\
    &= \left. \frac{\partial}{\partial \theta_j} 
    \Tr\big( (\partial_k U)\, \tilde{\rho}\, U^\dagger\, \tilde{H} + 
             U\, \tilde{\rho}\, (\partial_k U^\dagger)\, \tilde{H} \big) 
    \right|_{\vec{\theta}=0} \notag \\
    &= \Tr\big( (\partial_j \partial_k U)\, \tilde{\rho}\, U^\dagger\, \tilde{H} + 
                (\partial_k U)\, \tilde{\rho}\, (\partial_j U^\dagger)\, \tilde{H} 
                \notag \\
    &\quad + (\partial_j U)\, \tilde{\rho}\, (\partial_k U^\dagger)\, \tilde{H} + 
              U\, \tilde{\rho}\, (\partial_j \partial_k U^\dagger)\, \tilde{H} 
    \big)_{\vec{\theta}=0} \notag \\
    &= -\frac{1}{2} \Tr\big( (P_j P_k + P_k P_j)\, \tilde{\rho}\, \tilde{H} 
            - 2 P_k\, \tilde{\rho}\, P_j\, \tilde{H} \notag \\
    &\quad - 2 P_j\, \tilde{\rho}\, P_k\, \tilde{H} + 
              \tilde{\rho}\, (P_j P_k + P_k P_j)\, \tilde{H} \big) \notag \\
    &= -\frac{1}{2} \Tr\big( \{ \operatorname{ad}_{P_k}, \operatorname{ad}_{P_j} \} (\tilde{\rho})\, \tilde{H} \big),
\end{align}
where the second last equality used the following identities:
\begin{align}
    \partial_k U|_{\theta=0} &= -iP_k, \quad \partial_k U^\dagger|_{\theta=0} = iP_k,\notag\\
    \partial_j\partial_k U|_{\theta=0} &= \partial_j\partial_k U^\dagger|_{\theta=0}= -\frac{1}{2}(P_jP_k+P_kP_j).
\end{align}
\end{proof}

The next lemma shows that the second-order condition implies first-order Lindbladian optimality.
\begin{lemma}
\label{lem:second_order_condition_implies_lindblad_optimality}
    We assume for any Pauli operator $P$ supported on $k-1$ adjacent qubits in $S$, $X\otimes P$ and $Y\otimes P$ are both contained in $\mathcal{G}_A$ where Pauli operators $X$ and $Y$ act on any of the ancilla qubit in $A$. 
    Let $\L(\rho)=L\rho L^\dagger - \frac{1}{2}\{L^\dagger L,\rho\}$, $L$ acts on $k-1$ adjacent qubits. Then
    \begin{align}
    \left.\frac{d}{dt}\Tr[e^{t\L}(\rho) H]\right|_{t=0}\geq 0,
    \end{align}
    if the second-order condition $K\succeq 0$ is satisfied.
\end{lemma}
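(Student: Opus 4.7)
My plan is to interpret the second-order condition $K\succeq 0$ as the nonnegativity of a quadratic form on generators, and then exhibit a single real coefficient vector whose value under that form equals exactly twice the quantity $\Tr(\mathcal{L}(\rho)H)$ whose sign we need to control. First I would unfold $\vec{c}^{\,T}K\vec{c}$ for $\vec{c}\in\mathbb{R}^d$: setting $A = \sum_j c_j P_j$ (Hermitian because every $P_j$ is a Hermitian Pauli), the identity $\sum_{jk}c_jc_k\{\ad_{P_j},\ad_{P_k}\}=2\,\ad_A^2$ combined with two uses of the cyclic-trace identity $\Tr([A,X]Y)=-\Tr(X[A,Y])$ gives
\begin{align*}
\vec{c}^{\,T} K \vec{c} \;=\; -\Tr\!\bigl(\ad_A^{2}(\tilde\rho)\tilde H\bigr) \;=\; \Tr\!\bigl([A,\tilde\rho]\,[A,\tilde H]\bigr).
\end{align*}

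The second step is to construct the right $A$ from the given $L$. I would decompose $L=L_1+iL_2$ into Hermitian parts $L_1=(L+L^\dagger)/2$, $L_2=(L-L^\dagger)/(2i)$, each a real linear combination of Paulis on the same $k-1$ adjacent system qubits. Picking any ancilla qubit $a$, the hypothesis of the lemma guarantees that $A = X_a\otimes L_1 + Y_a\otimes L_2$ lies in the real span of $\mathcal{G}_A$, so its coefficient vector $\vec{c}$ is a legal input to the Hessian quadratic form. Substituting $X_a=\ket{0}\bra{1}+\ket{1}\bra{0}$ and $Y_a=i(\ket{1}\bra{0}-\ket{0}\bra{1})$ rewrites $A$ in the compact form $A=\ket{0}\bra{1}\otimes L^\dagger + \ket{1}\bra{0}\otimes L$; this is where the ancilla asymmetry (initialized in $\ket{0}$) enters.

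For the third step I would plug this $A$ into $\Tr([A,\tilde\rho][A,\tilde H])$ with $\tilde\rho = \ket{0}\bra{0}\otimes\rho$, $\tilde H = I\otimes H$. The projector $\ket{0}\bra{0}$ kills half of the commutator, leaving $[A,\tilde\rho]=\ket{1}\bra{0}\otimes L\rho - \ket{0}\bra{1}\otimes \rho L^\dagger$, while $[A,\tilde H]=\ket{0}\bra{1}\otimes[L^\dagger,H]+\ket{1}\bra{0}\otimes[L,H]$. Only the ancilla-diagonal blocks ($\ket{0}\bra{0}$ and $\ket{1}\bra{1}$) survive the partial trace on $A$, and a short cyclic-trace simplification on the system side collapses the result to $2\,\Tr\!\bigl((L\rho L^\dagger-\tfrac{1}{2}\{L^\dagger L,\rho\})H\bigr)=2\,\Tr(\mathcal{L}(\rho)H)$. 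Combining the three steps, $\frac{d}{dt}\Tr\!\bigl(e^{t\mathcal{L}}(\rho)H\bigr)\big|_{t=0}=\Tr(\mathcal{L}(\rho)H)=\tfrac12\,\vec{c}^{\,T}K\vec{c}\geq 0$.

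The hard part is not any single calculation — each of the three steps is elementary once set up — but rather recognizing the construction in step 2: a Hermitian generator on $AS$ of the specific form $X_a\otimes L_1 + Y_a\otimes L_2$, acting on a state with the ancilla in $\ket{0}$, reduces on the system register to a genuinely dissipative Lindbladian perturbation with jump operator $L=L_1+iL_2$. This is the conceptual reason the unitary second-order condition on the enlarged Hilbert space is already strong enough to imply first-order optimality against all physically realizable (CPTP) Lindbladian perturbations on $S$ alone; the remaining algebra is bookkeeping.
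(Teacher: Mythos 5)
Your proposal is correct and follows essentially the same route as the paper: both hinge on embedding the jump operator into the Hermitian generator $X\otimes\mathrm{Re}(L)+Y\otimes\mathrm{Im}(L)=\bigl(\begin{smallmatrix}0&L^\dagger\\ L&0\end{smallmatrix}\bigr)$ on the ancilla-extended space, identifying $\Tr(\L(\rho)H)$ with $-\tfrac{1}{2}\Tr(\ad_G^2(\tilde\rho)\tilde H)=\tfrac{1}{2}\vec{c}^{\,T}K\vec{c}$, and invoking $K\succeq 0$. The only cosmetic difference is that you evaluate $\Tr([A,\tilde\rho][A,\tilde H])$ via the two separate commutators, while the paper computes $\ad_G^2(\tilde\rho)$ directly in block form.
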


\begin{proof}
 Write $L=A+iB$, where $A, B$ are Hermitian over $k-1$ adjacent qubits, then construct a unitary operator $G$ on one ancilla qubit and $k-1$ adjacent system qubits
\begin{align}
    G &= X\otimes A + Y\otimes B \notag\\
    &= \begin{pmatrix}
      0 & L^\dagger \\
      L & 0
    \end{pmatrix}
    =\sum_{j=1}^d\alpha_jP_j.
\end{align}
We first want to show $\Tr(\L(\rho)H) = -\frac{1}{2}\Tr[\operatorname{ad}_G^2(\tilde\rho)\tilde H]$. Note that 
\begin{align}
    \operatorname{ad}_G(\tilde\rho) = \begin{pmatrix}
0 & -\rho L^\dagger \\
L \rho & 0
\end{pmatrix},
\end{align}
and
\begin{align}
     \operatorname{ad}_G^2(\tilde\rho) = \begin{pmatrix}
\{L^\dagger L,\rho\} & 0 \\
0 & -2L\rho L^\dagger
\end{pmatrix}.
\end{align}
Therefore,
\begin{align}
     -\frac{1}{2}\Tr[\operatorname{ad}_G^2(\tilde\rho)\tilde H] &= -\frac{1}{2}\Tr(\{L^\dagger L,\rho\}H-2L\rho L^\dagger H) \notag \\
     &=\Tr(\L(\rho) H).
\end{align}
Finally, we show that the first-order Lindbladian optimality is satisfied:
    \begin{align}
        \left.\frac{d}{dt}\Tr(e^{t\L}(\rho) H)\right|_{t=0}&=\Tr(\L(\rho)H) \notag \\
        &= -\frac{1}{2}\Tr(\operatorname{ad}_G^2(\tilde\rho)\tilde H) \notag\\
        &=-\frac{1}{4}\sum_{j,k}\alpha_j\alpha_k\Tr(\{\operatorname{ad}_{P_j},\operatorname{ad}_{P_k}\}(\tilde\rho)\tilde H)\notag \\
        &= \frac{1}{2}\sum_{j,k}\alpha_j\alpha_k K_{ij} \tag{by \Cref{lem: Hessian_form}}\\
        &= \frac{1}{2}\alpha^\dagger K \alpha\geq 0.
    \end{align}
\end{proof}

\end{document}